\newcounter{MYtempeqncnt}
\newtheorem{theorem}{Theorem}
\DeclarePairedDelimiterX\MeijerM[3]{\lparen}{\rparen}%
{\,#3\delimsize\vert\begin{smallmatrix}#1 \\ #2\end{smallmatrix}}
\newcommand\MeijerG[8][]{%
  G^{\,#2,#3}_{#4,#5}\MeijerM[#1]{#6}{#7}{#8}}
\newcommand\MeijerG*[7]{%
  G^{\,#1,#2}_{#3,#4}\MeijerM*{#5}{#6}{#7}}
\newcommand{\subparagraph}{}
\begin{document}
\title{On Performance Characterization of Cascaded Multiwire-PLC/MIMO-RF Communication System}
\author{Yun Ai,~\IEEEmembership{Member, IEEE},~Long Kong,~Michael Cheffena,\\~Symeon Chatzinotas,~\IEEEmembership{Senior Member, IEEE}, and Björn Ottersten, \IEEEmembership{Fellow, IEEE}
\thanks{Y. Ai and M. Cheffena are with the Norwegian University of Science and Technology (NTNU), 2815 Gjøvik, Norway (email: \{yun.ai,michael.cheffena\}@ntnu.no).}
\thanks{L. Kong, S. Chatzinotas, and B. Ottersten are with the Interdisciplinary Centre for Security Reliability and Trust (SnT), University of Luxembourg, L-1855 Luxembourg (email: \{long.kong, symeon.chatzinotas, bjorn.ottersten\}@uni.lu).}
\thanks{Manuscript received xxxx xxxx.}}


\maketitle

\begin{abstract}
The flexibility of radio frequency (RF) systems and the omnipresence of power cables potentially make the cascaded power line communication (PLC)/RF system an efficient and cost-effective solution in terms of wide coverage and high-speed transmission. This letter proposes an opportunistic decode-and-forward (DF)-based multi-wire/RF relaying system to exploit the advantages of both techniques. The outage probability, bit error rate, and system channel capacity are correspondingly chosen to analyze the properties of the proposed system, which are derived in closed-form expressions and validated via Monte-Carlo simulations. One can observe that our proposed system outperforms the wireless-only system in terms of coverage and data rate, especially when there exists a non-line-of-sight (NLoS) connection between the transmitter and receiver pair.
\end{abstract}

\begin{IEEEkeywords}
Multiwire power line communication (PLC), radio frequency (RF), multiple-input and multiple-output (MIMO), decode-and-forward (DF) relaying.
\end{IEEEkeywords}


\IEEEpeerreviewmaketitle

\section{Introduction \label{sec:introduction}}
\IEEEPARstart{F}{ast} and reliable wireless local area networks (WLAN), especially WiFi, have facilitated the way people access the Internet tremendously and become an integral part of people's life since its beginning in the 1990s. For indoor wireless systems, the hotspot coverage can be limited to a single room with concrete or metal walls that block radio waves from propagating outside \cite{sayed2014narrowband}. The demands of user traffic in the era of Internet of Things (IoT) and connected devices raises the challenges in terms of blind spots and slower rates at longer range encountered in today's indoor wireless solutions \cite{lai2010wireless}.

With the omnipresence of power cables inside a building, the most economical and efficient solution might be a hybrid network combining the wireless and power line communication (PLC) systems in either parallel \cite{lai2010wireless, sayed2014narrowband, lai2012using} or cascaded \cite{mckeown2015priority, gheth2018hybrid, mathur2018performance} architectures. The cascaded PLC/RF system has been initially investigated in \cite{mckeown2015priority, gheth2018hybrid, mathur2018performance}. However, the works are limited to a simple dual-hop setup with point-to-point transmission for the PLC link; and the radio frequency (RF) sub-system cannot be readily extended to the multiple-input and multiple-output (MIMO) setup, which has already been widely used and has become the de-facto configuration in practical WLAN systems. For the PLC sub-system, the unconventional nature of the PLC channel makes fast and reliable transmission of data via such channels a crucial challenge \cite{gheth2018hybrid, di2011noise, qian2018performance, mathur2018performance}. Thus, the single-input and single-output (SISO) setup of the PLC communication in \cite{gheth2018hybrid} and \cite{mathur2018performance} are severely limited in data rate due to PLC channel impairments. Thus, the PLC sub-system becomes the bottleneck in improving the performance of the whole system. Moreover, MIMO over PLC has already been incorporated to the multiwire PLC standard G.hn to enhance data rate and signaling distance \cite{oksman2009g}, however which is not considered in previous works \cite{mckeown2015priority, gheth2018hybrid, mathur2018performance}.

Different from the simple point-to-point structures considered in previous works \cite{mckeown2015priority, gheth2018hybrid, mathur2018performance} and also aiming to further improve the performance system of cascaded PLC/RF system, we investigate a practical structure of a cascaded multiwire-PLC/MIMO-RF system with opportunistic relaying in this paper. In this context, the hybrid cascaded PLC/RF system consists of one base node that is connected to the Internet via fiber and a number of PLC/wireless interference relaying nodes that are connected to the base with PLC backhaul and distributed all over the place. Besides the advantage of increased system coverage, the PLC backhaul also spares the wireless channel of the relaying node to make it dedicated to servicing end users instead of sharing resources (time/bandwdith) in the relay-base communication. 

The main contributions of this paper are given as follows:
\begin{enumerate}
\item  A practical integrated PLC/RF architecture with the advantages of low installation cost and improved performance is proposed and thereafter investigated.
\item The statistics of the equivalent end-to-end signal-to-noise ratio (SNR) of the proposed system are derived.
\item Based on the derived statistics, theoretical analysis of the outage probability, bit error rate (BER), and channel capacity are conducted.
\item We analyze the performance of the proposed system under different configurations of the PLC and RF sub-systems, which shows improved performance in terms of data rates and coverage.
\end{enumerate}

\emph{Notations}: $\Phi(\cdot)$ is the cumulative distribution function (CDF) of a zero-mean Gaussian random variable (RV) with unit variance \cite[Eq.~(8.250.1)]{jeffrey2007table}, $\| \cdot \|_{F}^{2}$ is the Frobenius norm, $\Gamma(\cdot, \cdot)$ and $\Gamma(\cdot)$ are the upper incomplete Gamma and Gamma functions \cite[Chpt.~8.3]{jeffrey2007table}, respectively. $G_{p,q}^{m,n}{\cdot}^{\cdot}(\cdot|:)$ is the Meijer $G$-function \cite[Chpt.~9.3]{jeffrey2007table}. 
\section{System and Channel Models \label{sec:system_model}}
\begin{figure}[!t]
\centering{\includegraphics[width=0.85\columnwidth]{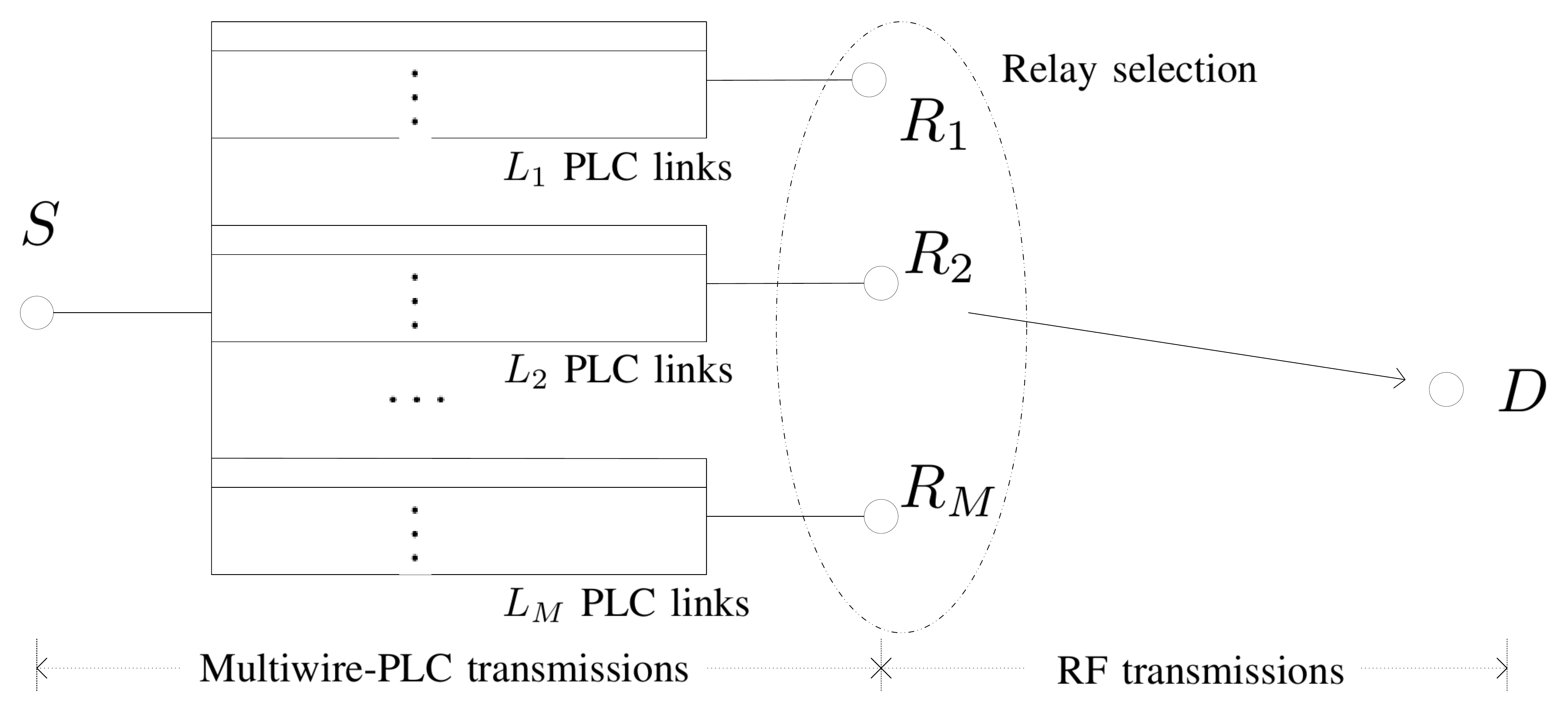}}
\caption{ Illustration of a cascaded multiwire-PLC/MIMO-RF system.}
\label{fig:system}
\end{figure}
We consider the data communication in a big house where the WiFi router (base node) and the users (desired receivers) are located far away and also blocked by concrete walls. The users will experience low data rate due to non-line-of-sight (NLoS) transmission and long transmission distance. With the hybrid system, shown in Fig. \ref{fig:system}, the communication between the base node and the desired receiver located far away operates in two phases. 

In the first phase, the data is sent via PLC links from the base node \emph{S} to the $M$ relaying nodes $R_{m}$, $m = 1, \cdots, M$, and $M > 1$, that are in proximity to the desired receiver \emph{D} with line-of-sight (LoS). To cope with the adverse effects of PLC channels, we assume that in the first phase the PLC sub-system employs a single transmitter and $L_m$-branch receiver diversity at each relaying node $R_{m}$\footnote{For simplicity, we assume that the numbers of branches between the base node and all PLC receivers are the same. However, it is straightforward to extend the analysis to the scenario where the diversities are different.}. In practice, physical power cables, i.e., the neutral, live, and ground wires as well as different paths can be utilized to obtain the multiple diversity branches for simultaneous data transmission. Furthermore, the opportunistic relay selection scheme is applied among the decode-and-forward (DF) relay nodes, i.e., the node having the best link quality will be chosen to decode and then forward the data. Subsequently in the second stage, the selected relay node communicates with the desired receiver using orthogonal space-time block coding (OSTBC) over MIMO wireless channels to improve performance through spatial multiplexing and diversity.
\subsection{Diversity Transmission with Relay Selection over Multi-Wire PLC Links \label{subsec:plc_transmission}}
The received signal at node $R_{m}$, $m= 1, \cdots, M$, via the $l$-th branch of the PLC channel can be written as
\begin{align}
y_{R_{m,l}} =  \sqrt{\mathcal{P}_{0}  \mathcal{L}_{0}} h_{R_{m,l}} s + w_{R_{m,l}},
\label{eq:received_signal}
\end{align}
where $\mathcal{P}_{0}$ is the transmit power, $\mathcal{L}_{0}$ is channel attenuation given as $\exp\left(-2(\alpha_{1} + \alpha_{2}f^{k})d\right)$ with constants $\alpha_{1}$ and $\alpha_{2}$, frequency $f$, and distance $d$ \cite{ai2016capacity}, $s$ is the transmit signal with unit energy, $w_{R_{m,l}}$ is the channel noise, and $h_{R_{m,l}}$, $l = 1, \cdots, L$, are independent and identically distributed (i.i.d.) log-normal channel gains for $l$-th PLC branch between \emph{S} and $R_{m, l}$ with probability density function (PDF) given as
\begin{align}
f_{h_{R_{m,l}}}(x) = \frac{ 1 }{\sqrt{ 2 \pi \sigma^2} x }  \exp\!\left(  - \frac{ ( \ln x - \mu  )^2 }{ 2 \sigma^2 } \right),
\label{eq:plc_pdf}
\end{align}
where $\mu$ and $\sigma^2$ are the average and variance of the Gaussian random RV $\ln x$, respectively. According to \cite{gheth2018hybrid, di2011noise, qian2018performance, mathur2018performance}, the PLC channel noise is a superposition of the background additive white Gaussian noise (AWGN) and Gaussian-distributed impulsive component. Therefore, the PLC noise is a Gaussian RV with the following variance:
$\sigma_{0}^{2} = ( 1 - p  ) \cdot \sigma_{g}^{2} + p \cdot ( \sigma_{g}^{2} + \sigma_{i}^{2} )$,
where $\sigma_{g}^{2}$ and $\sigma_{i}^{2}$ are the powers of the background AWGN and impulsive noise, respectively; and $p$ is the arrival probability of the impulsive noise component.

After receiving the signals from the $L$ branches, the relaying node applys the maximal ratio combining (MRC) to exploit diversity. Then, the received SNR at node $R_{m}$ is
\begin{align}
\gamma_{R_{m}} = \sum_{l=1}^{L} \frac{ \mathcal{P}_{0}  \mathcal{L}_{0}  }{ \sigma_{0}^{2} } h_{R_{m,l}}^{2} =  \sum_{l=1}^{L} \overline{\gamma}_{0}  h_{R_{m,l}}^{2} ,
\label{eq:mrc_snr}
\end{align}
where $\overline{\gamma}_0 = \frac{ \mathcal{P}_{0}  \mathcal{L}_{0}  }{ \sigma_{0}^{2} }$. Using the results in \cite{beaulieu2004highly}, $\gamma_{R_{m}}$ can be accurately approximated as the log-normal sum distributed RV with PDF and CDF
\begin{align}
f_{ \gamma_{R_{m}} }( x ) & =  \frac{ a_{1} a_{2}  (  \frac{x}{\overline{\gamma}_{0} } )^{ -( \frac{a_{2}}{\lambda} + 1 ) } }{ \lambda \sqrt{2\pi} \overline{\gamma}_{0} }  \exp\!\left( - \frac{ [ a_{0} \! - \! a_{1} (  \frac{x}{\overline{\gamma}_{0} } )^{ - \frac{a_{2}}{\lambda} } ]^2 }{ 2 } \right),  \label{eq:plc_mrc_pdf} \\
F_{ \gamma_{R_{m}} }( x ) & =  \Phi\!\left( a_{0} - a_{1}\Bigl( \frac{x}{\overline{\gamma}_{0}} \Bigr)^{ -\frac{a_{2}}{\lambda} }  \right) ,
\label{eq:plc_mrc_cdf}
\end{align}
where $\lambda =  \frac{ \ln\! 10 }{ 10 }$, the mapping between parameters $\mu$, $\sigma_{0}^{2}$ in (\ref{eq:plc_pdf}), $L$ in (\ref{eq:mrc_snr}) and constants $a_{0}$, $a_{1}$, and $a_{2}$ are detailed in \cite{beaulieu2004highly}.

Next, applying relay selection among the $M$ relaying nodes, the CDF of the equivalent SNR $\gamma_{R}$ at the selected relaying nodes of the PLC sub-system is
\begin{align}
F_{\gamma_{R}} (x) =  \prod_{m=1}^{M} \!\! F_{ \gamma_{R_{m}} }( x ) = \left[ \Phi\!\left( a_{0} - a_{1}\Bigl( \frac{x}{\overline{\gamma}_{0}} \Bigr)^{ -\frac{a_{2}}{\lambda} }  \right)  \right]^{M} .
\label{eq:equ_plc_snr_cdf}
\end{align}

Differentiating the above CDF in (\ref{eq:equ_plc_snr_cdf}) yields the following PDF of the SNR $\gamma_{R}$
\begin{align}
f_{ \gamma_{R} }( x )  =  &  \frac{  M    a_{1} a_{2}  (  \frac{x}{\overline{\gamma}_{0} } )^{ -( \frac{a_{2}}{\lambda} + 1 ) } }{ \lambda \sqrt{2\pi} \overline{\gamma}_{0} }     \exp\!\left( - \frac{ [ a_{0} \! - \! a_{1} (  \frac{x}{\overline{\gamma}_{0} } )^{ - \frac{a_{2}}{\lambda} } ]^2 }{ 2 } \right) \nonumber \\   &  \times  \left[ \Phi\!\left( a_{0} - a_{1}\Bigl( \frac{x}{\overline{\gamma}_{0}} \Bigr)^{ -\frac{a_{2}}{\lambda} }  \right)  \right]^{M-1}  .
\label{eq:equ_plc_snr_pdf}
\end{align}
\subsection{MIMO-OSTBC Transmission over RF Links \label{subsec:plc_transmission}}
At the chosen relaying node with the best link quality to the receiver \emph{D}, it selects $K$ transmit symbols, which are encoded with a $N_{R} \times T$ OSTBC matrix $\mathbf{Q}$ and transmitted over $T$ time slots. Then, the received signal at node \emph{D} can be written as
\begin{align}
\mathbf{Y}_{D} =   \sqrt{ \mathcal{P}_{1}  \mathcal{L}_{1} } \mathbf{H}_{RD} \mathbf{Q} + \mathbf{W}_{D} ,
\label{eq:received_signal_mimo}
\end{align}
where $\mathbf{H}_{RD}$ is the $N_{R} \times N_{D}$ channel gain matrix with entries $h_{ij}$ being i.i.d. Nakagami-$m$ RVs of parameters $m$ and $\Omega$, $\mathcal{P}_{1}$ is the transmit power, $\mathcal{L}_{1}$ describes the path loss given by $\frac{c}{  d^n}$ with transmission distance $d$, path loss exponent $n$, and antenna-related constant $c$ \cite{ai2015radio}, $\mathbf{W}_{D} \in  \mathbb{C}^{N_{R} \times T}$ is noise matrix with variance $\sigma_{1}^{2}$. With the OSTBC being used, the MIMO channels are reduced to the rank\{$\mathbf{H}_{RD}$\} parallel SISO channels due to the orthogonalization property of OSTBC. Then, the $k$-th combined symbol can be written as
\begin{align}
y_{D,k} = \| \mathbf{H}_{RD} \|_{F}^{2}  \sqrt{ \mathcal{P}_{1}  \mathcal{L}_{1} } \cdot s_{k} + w_{k} ,
\label{eq:received_signal_mimo2}
\end{align}
where $w_{k}$ is the AWGN with variance $\sigma_{1}^2 \|\mathbf{H}_{RD}\|_{F}^{2}$. Then, the instantaneous SNR at receiver \emph{D} using MIMO-OSTBC with rate $R_{c}$ is 
\begin{equation}
 \gamma_{D} =  \frac{\mathcal{ P}_{1}  \mathcal{L}_{1} \| \mathbf{H}_{RD} \|_{F}^{2} }{ R_{c} N_{R} \sigma_{1}^{2} } = \sum\limits_{i=1}^{N_{D}} \sum\limits_{j=1}^{N_{R}} \frac{ \mathcal{ P}_{1}  \mathcal{L}_{1} |h_{ij}|^{2} }{ R_{c} N_{R} \sigma_{1}^{2} }.
\end{equation}
 Therefore, it immediately follows that the SNR $\gamma_{D}$ is a Gamma RV with the following PDF and CDF:
\begin{align}
f_{\gamma_{D}}(x) & = \frac{ \alpha^{m N_{R} N_{D} }  x^{ m N_{R} N_{D} - 1 }  }{ \Gamma( m N_{R} N_{D} )  \rho^{ m N_{R} N_{D} } }   \exp\Bigl( - \frac{ \alpha x }{ \rho } \Bigr) , \label{eq:mimo_ostbc_pdf} \\
F_{\gamma_{D}}(x) & = 1 - \frac{ \Gamma\!\bigl( m N_{R} N_{D}, \frac{ \alpha x }{ \rho } \bigr) }{ \Gamma( m N_{R} N_{D} ) } ,
\label{eq:mimo_ostbc_cdf}
\end{align}
where $\alpha = \frac{m}{\Omega}$ and $\rho = \frac{ \mathcal{ P}_{1}  \mathcal{L}_{1} }{ R_{c} N_{R} \sigma_{1}^{2} }$. For the simplicity of following notations, let $ \mathcal{A}=m N_{R} N_{D}$.
\section{System Performance Analysis \label{sec:performance_evaluation}}
In this section, we first derive the novel analytical expressions for statistics of equivalent end-to-end SNR, and then we present novel analytical expressions for the considered metrics.

\subsection{Equivalent End-to-end SNR and Outage Performance \label{subsec:equ_snr}}
With the nodes employing DF relaying, the CDF of the equivalent end-to-end SNR $\gamma_{eq}$ is given by
\begin{align}
F_{ \gamma_{eq} }\!( x ) \!  &= \! \mathrm{Pr}( \min( \gamma_{R}, \gamma_{D} ) \! < \! x ) \! \nonumber \\
&= \!  F_{\gamma_{R}}\!( x ) \!  + \!  F_{\gamma_{D}}\!(x)  [ 1 \! - \!  F_{\gamma_{R}}\!(x)  ] .
\label{eq:equ_snr_cdf}
\end{align}
The PDF of the equivalent SNR $\gamma_{eq}$ follows immediately by differentiating (\ref{eq:equ_snr_cdf}) as
\begin{equation} \label{eq:equ_snr_pdf}
f_{ \gamma_{eq} }( x )  =  f_{\gamma_{R}}(x) \bar{F}_{\gamma_D}(x) +    f_{\gamma_{D}} (x)\bar{F}_{\gamma_{R}}(x),
\end{equation}
where $\bar{F}_{\gamma_{D}}( \cdot )$ and $\bar{F}_{\gamma_{R}}( \cdot )$ are the complementary CDF of $\gamma_{D}$ and $\gamma_R$, respectively.

The connection outage occurs when the instantaneous equivalent SNR plunges below some threshold $\gamma_{th}$. Then, the connection outage probability (COP) can be computed by
\begin{align}
\hspace{-2ex} P_{out} &  =   \mathrm{Pr}( x < \gamma_{th} )   =   F_{ \gamma_{eq} }( \gamma_{th} ) \nonumber \\
 = &   1  +  \left(\left[ \Phi\!\left( a_{0}  - a_{1}  \Bigl( \frac{ \gamma_{th} }{\overline{ \gamma }_{0}} \Bigr)^{ -\frac{a_{2}}{\lambda} }  \right)  \right]^{M}  -1\right)     \frac{ \Gamma\!\bigl( \mathcal{A}, \frac{ \alpha \gamma_{th} }{ \rho } \bigr) }{ \Gamma( \mathcal{A} ) }   .
\label{eq:outage_probability}
\end{align}
\subsection{Average Bit Error Rate (BER) \label{subsec:ber}}
Using the following unified expression \cite{ansari2011new}, the average BER for a range of binary modulation schemes can be derived
\begin{equation} \label{eq:avg_ber1b}
\begin{split}
P_{e} & =   \frac{ q^{p} }{ 2 \Gamma(p) }  \int_{ 0 }^{ \infty }   \frac{\exp( - q x )  }{x^{ 1-p }}   F_{ \gamma_{eq} }( x ) \, d x  
=  \frac{  P_{e1} - P_{e2} + P_{e3}  }{ 2 q^{-p}\Gamma(p) } ,
\end{split}
\end{equation}
where $p$ and $q$ are determined by the specific modulation scheme (e.g., $p = q = 0.5$ for binary frequency shift keying (BFSK), $p=0.5$, $q=1$ for binary phase shift keying (BPSK), $p = q = 1$ for differential phase shift keying (DPSK), and $p=1$, $q=0.5$ for binary noncoherent frequency-shift keying (NCFSK)). In (\ref{eq:avg_ber1b}), $P_{e1} \! = \! \int_{ 0 }^{ \infty } \! x^{ p-1 }   \exp( - q x ) dx$, $P_{e2} \! = \! \int_{ 0 }^{ \infty } \! x^{ p-1 }   \exp( - q x )   \frac{ \Gamma\!\left( \mathcal{A}, \frac{ \alpha x }{ \rho } \right) }{ \Gamma( \mathcal{A} ) }   dx$, and $P_{e3} \! = \! \int_{ 0 }^{ \infty } \! x^{ p-1 }   \exp( - q x )  \frac{ \Gamma\!( \mathcal{A}, \frac{ \alpha x }{ \rho } ) }{ \Gamma( \mathcal{A} ) }  \bigl[ \Phi\!\bigl( a_{0} \! - \! a_{1} \!  ( \frac{ x }{\overline{ \gamma }_{0}}  )^{ -\frac{a_{2}}{\lambda} }  \bigr)  \bigr]^{M} dx$.
\begin{theorem}
The BER performance of our considered system setup is therefore given in (\ref{BER}), shown at the top of next page, where $w_{\iota}$ and $y_{\iota}$, ($\iota = 1,\dots, L$), are the weights and zeros of the $L$-order Hermite-Gauss polynomial \cite{steen1969gaussian}.
\begin{figure*}[!t]
\setcounter{MYtempeqncnt}{\value{equation}}
\setcounter{equation}{16}
\begin{equation} \label{BER}
\begin{split}
P_{e} = \frac{1}{2}  - \frac{ 1 }{2\Gamma(p) \Gamma( \mathcal{A} ) }   \MeijerG*{2}{1}{2}{2}{ 1-p, 1 }{ 0, \mathcal{A} }{ \frac{ \alpha  }{ q  \rho } } + \sum_{\iota=1}^{L}     \frac{   w_{\iota} y_{\iota}^{2p-1} }{ \Gamma(p) \Gamma(\mathcal{A}) }  \! \Gamma\left( \!  \mathcal{A}, \!   \frac{ \alpha y_{\iota}^{2} }{ q \rho }  \!  \right)  \left[ \! \Phi\!\left( a_{0} \! - \! a_{1} \!  \left( \frac{ y_{\iota}^{2} }{q \overline{ \gamma }_{0}}  \right)^{ -\frac{a_{2}}{\lambda} }  \right)  \!  \right]^{ \! M }.
\end{split}
\end{equation}
\hrulefill
\vspace{-0.3cm}
\end{figure*}
\end{theorem}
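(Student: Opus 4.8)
The plan is to begin from the unified BER formula (\ref{eq:avg_ber1b}) and insert the explicit CDF of the equivalent SNR. First I would expand $F_{\gamma_{eq}}(x)$ obtained in (\ref{eq:equ_snr_cdf}) by substituting the component CDFs (\ref{eq:equ_plc_snr_cdf}) and (\ref{eq:mimo_ostbc_cdf}), which after cancellation gives the same three-term structure already recorded in (\ref{eq:outage_probability}):
\begin{equation*}
F_{\gamma_{eq}}(x) = 1 - \frac{\Gamma\!\bigl(\mathcal{A},\tfrac{\alpha x}{\rho}\bigr)}{\Gamma(\mathcal{A})} + \frac{\Gamma\!\bigl(\mathcal{A},\tfrac{\alpha x}{\rho}\bigr)}{\Gamma(\mathcal{A})} \left[\Phi\!\left(a_0 - a_1\Bigl(\tfrac{x}{\overline{\gamma}_0}\Bigr)^{-\frac{a_2}{\lambda}}\right)\right]^{M}.
\end{equation*}
Plugging this into (\ref{eq:avg_ber1b}) and distributing the integral over the three summands reproduces exactly the decomposition $P_{e1}-P_{e2}+P_{e3}$ defined below (\ref{eq:avg_ber1b}). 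The first piece is the Gamma integral $P_{e1}=\int_0^\infty x^{p-1}e^{-qx}\,dx=\Gamma(p)/q^{p}$, so that the prefactor $\tfrac{q^{p}}{2\Gamma(p)}$ collapses it to the constant $\tfrac12$, which is the leading term of (\ref{BER}).

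For $P_{e2}$ I would rewrite both $e^{-qx}$ and the upper incomplete Gamma function $\Gamma(\mathcal{A},\alpha x/\rho)$ as Meijer $G$-functions \cite[Chpt.~9.3]{jeffrey2007table}, and then evaluate the resulting product integral through the Mellin--Barnes convolution identity for two $G$-functions \cite{jeffrey2007table}. Since the two scale factors are $q$ and $\alpha/\rho$, this collapses the integral to a single $G^{2,1}_{2,2}$ of argument $\tfrac{\alpha}{q\rho}$ with upper parameters $1-p,1$ and lower parameters $0,\mathcal{A}$; after applying the prefactor and the $1/\Gamma(\mathcal{A})$ normalization, this is precisely the second (subtracted) term of (\ref{BER}).

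The main obstacle is $P_{e3}$, because the factor $[\Phi(\cdot)]^{M}$, a power of the Gaussian CDF with a nonlinear argument in $x$, admits no closed-form antiderivative against the remaining weight. My plan here is the change of variable $x=t^{2}/q$, which maps $x^{p-1}e^{-qx}\,dx$ to $\tfrac{2}{q^{p}}\,t^{2p-1}e^{-t^{2}}\,dt$ and casts $P_{e3}$ into the Gauss--Hermite template $\int_0^\infty e^{-t^{2}}g(t)\,dt$, with $g(t)$ collecting the factors $t^{2p-1}$, $\Gamma(\mathcal{A},\tfrac{\alpha t^{2}}{q\rho})/\Gamma(\mathcal{A})$, and $[\Phi(\cdot)]^{M}$ evaluated at argument $\tfrac{t^{2}}{q\overline{\gamma}_0}$. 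Approximating this integral by $\sum_{\iota=1}^{L} w_\iota\, g(y_\iota)$ with the $L$-order Hermite--Gauss weights $w_\iota$ and nodes $y_\iota$ \cite{steen1969gaussian}, and noting that the $\tfrac{2}{q^{p}}$ introduced by the substitution exactly cancels the prefactor $\tfrac{q^{p}}{2\Gamma(p)}$, yields the summation term of (\ref{BER}); here the quadrature is the sole source of approximation, so the stated identity is understood in this asymptotically tight sense. Adding the three contributions as $\tfrac12-(\text{Meijer }G\text{ term})+(\text{Hermite--Gauss sum})$ then completes the proof.
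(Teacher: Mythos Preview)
Your proposal is correct and follows essentially the same approach as the paper: the same three-term split $P_{e1}-P_{e2}+P_{e3}$, the Gamma integral for $P_{e1}$, the Meijer $G$-function representation plus the product-of-$G$ integral for $P_{e2}$, and the substitution $qx=y^{2}$ followed by the half-line Hermite--Gauss quadrature from \cite{steen1969gaussian} for $P_{e3}$. The only cosmetic difference is that the paper cites \cite[Eq.~(23)]{adamchik1990algorithm} for the $G$-function convolution, whereas you point to \cite{jeffrey2007table}.
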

\begin{proof}
See Appendix \ref{Proof_BER}.
\end{proof}
\subsection{Channel Capacity \label{subsec:capacity}}
For investigated setup, the normalized achievable capacity $\overline{C}$ (nats/s/Hz) in Shannon sense can be expressed as \cite{ai2016capacity}
\begin{align}
\overline{C}  & =  \int_{ 0 }^{ \infty } \ln( 1 + x )    [  f_{\gamma_{D}}(x)\bar{F}_{\gamma_{R}}(x)   +  f_{\gamma_{R}}(x)  \bar{F}_{\gamma_{D}}(x)]dx \nonumber \\  &  =  \overline{C}_{1} + \overline{C}_{2} - \overline{C}_{3} ,
\label{eq:avg_cap1}
\end{align}
where $\overline{C}_{1} = \int_{0}^{\infty} \ln( 1 + x ) f_{\gamma_{D}}(x) \, dx $, $\overline{C}_{2} = \int_{0}^{\infty} \ln( 1 + x ) f_{\gamma_{R}}(x) \bar{F}_{\gamma_{D}}(x) \, dx $, and $\overline{C}_{3} = \int_{0}^{\infty} \ln( 1 + x ) F_{\gamma_{R}}(x)  f_{\gamma_{D}} (x) \, dx $.
\begin{theorem}
The average channel capacity of our considered system setup is given in (\ref{Capacity}), shown at the top of next page, where $\tau$, $\varpi_{i}$ and $z_{i}$ are respectively the order of the Hermite polynomial, weight and $i$-th zero of the polynomial.
\begin{figure*}[!t]
\setcounter{MYtempeqncnt}{\value{equation}}
\setcounter{equation}{18}
\begin{align} \label{Capacity}
\overline{C} =& \frac{ \alpha^{\mathcal{A} }  }{ \Gamma( \mathcal{A} )  \rho^{ \mathcal{A} } }   \MeijerG*{3}{1}{2}{3}{ - \mathcal{A} , 1 - \mathcal{A} }{ 0, -\mathcal{A}, -\mathcal{A} }{\!\! \frac{ \alpha  }{ \rho } }  + \sum \limits_{i =1}^\tau \varpi_{i} \frac{M [\Phi(\sqrt{2}z_i)]^{M-1}}{\sqrt{\pi}\Gamma( \mathcal{A})}  \ln \biggl[ 1 + \bar{\gamma}_0 \Bigl( \frac{a_0 - \sqrt{2}z_i}{a_1}\Bigr)^{-\frac{\lambda}{a_2}} \biggr]  \Gamma\biggl(\mathcal{A}, \bar{\gamma}_0 \Bigl( \frac{a_0 - \sqrt{2}z_i}{a_1}\Bigr)^{-\frac{\lambda}{a_2}}  \biggr)  \nonumber \\
& - \frac{ 2 \alpha^{ \mathcal{A} - 1 } }{ \Gamma( \mathcal{A} ) \rho^{ \mathcal{A} - 1 } }  \sum_{\iota=1}^{L} w_{\iota}  y_{\iota} \Bigl( \frac{ \rho y_{\iota}^{2} }{ \alpha } \Bigr)^{  \mathcal{A} - 1 }   \ln\Bigl( 1 + \frac{ \rho y_{\iota}^{2} }{ \alpha } \Bigr)   \left[ \Phi\left( a_{0} - a_{1} \left( \frac{ \rho y_{\iota}^{2} }{ \alpha \bar{\gamma}_{0} } \right)^{ - \frac{a_{2}}{\lambda} } \right) \right]^M .
\end{align}
\hrulefill
\vspace{-0.3cm}
\end{figure*}
\end{theorem}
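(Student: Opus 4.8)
The plan is to split the capacity \eqref{eq:avg_cap1} into its three constituents $\overline{C}_1$, $\overline{C}_2$ and $\overline{C}_3$ and to treat each with a tailored technique: $\overline{C}_1$ can be evaluated in closed form as a single Meijer-$G$ function, whereas the $\Phi(\cdot)$-coupled integrals $\overline{C}_2$ and $\overline{C}_3$ resist exact evaluation and will be handled by Gauss--Hermite quadrature after Gaussianizing substitutions. Reassembling $\overline{C}_1+\overline{C}_2-\overline{C}_3$ then reproduces the three terms of \eqref{Capacity} in order.

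For $\overline{C}_1=\int_0^\infty \ln(1+x)\,f_{\gamma_D}(x)\,dx$ with the Gamma density \eqref{eq:mimo_ostbc_pdf}, I would write the logarithm through its Meijer-$G$ representation $\ln(1+x)=G^{1,2}_{2,2}\bigl(x\,\big|\,{}^{1,1}_{1,0}\bigr)$, leaving an integral of the form $\int_0^\infty x^{\mathcal{A}-1}e^{-\alpha x/\rho}\,G^{1,2}_{2,2}(x)\,dx$. Invoking the standard Mellin--Barnes integral for the product of a power, an exponential and a Meijer-$G$ function \cite[Chpt.~9.3]{jeffrey2007table} collapses this into the single $G^{3,1}_{2,3}$ function that forms the first term of \eqref{Capacity}, the prefactor $\alpha^{\mathcal{A}}/(\Gamma(\mathcal{A})\rho^{\mathcal{A}})$ being inherited directly from \eqref{eq:mimo_ostbc_pdf}.

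For $\overline{C}_2=\int_0^\infty \ln(1+x)\,f_{\gamma_R}(x)\,\bar{F}_{\gamma_D}(x)\,dx$, the $\Phi(\cdot)$ factor in \eqref{eq:equ_plc_snr_pdf} precludes an exact answer. Writing $F_{\gamma_R}(x)=[\Phi(u)]^M$ with $u=a_0-a_1(x/\bar\gamma_0)^{-a_2/\lambda}$, so that $f_{\gamma_R}(x)\,dx=M[\Phi(u)]^{M-1}\,d\Phi(u)$, I would apply the substitution $u=\sqrt2\,z$; the standard-normal differential becomes $d\Phi(u)=\tfrac{1}{\sqrt\pi}e^{-z^2}\,dz$ and the measure turns into $\tfrac{M}{\sqrt\pi}[\Phi(\sqrt2 z)]^{M-1}e^{-z^2}\,dz$ with $x=\bar\gamma_0\bigl((a_0-\sqrt2 z)/a_1\bigr)^{-\lambda/a_2}$. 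A $\tau$-point Gauss--Hermite rule \cite{steen1969gaussian} with nodes $z_i$ and weights $\varpi_i$ applied to the smooth factor $\ln(1+x)\,\bar{F}_{\gamma_D}(x)$, with $\bar{F}_{\gamma_D}$ expressed through the upper incomplete Gamma function via \eqref{eq:mimo_ostbc_cdf}, then yields the second (summation) term of \eqref{Capacity}.

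For $\overline{C}_3=\int_0^\infty \ln(1+x)\,F_{\gamma_R}(x)\,f_{\gamma_D}(x)\,dx$ I would instead exploit the Gamma density: the substitution $y=\sqrt{\alpha x/\rho}$, i.e.\ $x=\rho y^2/\alpha$, maps $f_{\gamma_D}(x)\,dx$ onto $\tfrac{2}{\Gamma(\mathcal{A})}y^{2\mathcal{A}-1}e^{-y^2}\,dy$, again exposing the $e^{-y^2}$ weight. An $L$-point Gauss--Hermite quadrature with nodes $y_\iota$ and weights $w_\iota$ applied to $\ln(1+\rho y^2/\alpha)\,[\Phi(\cdot)]^M$, with $F_{\gamma_R}$ taken from \eqref{eq:equ_plc_snr_cdf}, produces the third term, and regrouping the constants recovers the stated prefactor $2\alpha^{\mathcal{A}-1}/(\Gamma(\mathcal{A})\rho^{\mathcal{A}-1})$. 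I expect $\overline{C}_1$ to be the crux: $\ln(1+x)$ together with the Gamma density admits no elementary antiderivative, so the real work is the correct Meijer-$G$ encoding of both factors and the bookkeeping of the Mellin--Barnes parameters so that they contract precisely to the $G^{3,1}_{2,3}$ form. By comparison $\overline{C}_2$ and $\overline{C}_3$ become routine once the substitutions reveal the $e^{-z^2}$ and $e^{-y^2}$ weights, the only subtlety being that the $\Phi(\cdot)$ terms force quadrature rather than closed form.
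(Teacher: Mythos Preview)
Your proposal is correct and follows essentially the same route as the paper: $\overline{C}_1$ via Meijer-$G$ representations of $\ln(1+x)$ and the exponential combined through the standard product integral \cite[Eq.~(23)]{adamchik1990algorithm}, $\overline{C}_2$ via the Gaussianizing substitution $\sqrt{2}z=a_0-a_1(x/\bar\gamma_0)^{-a_2/\lambda}$ followed by Gauss--Hermite quadrature, and $\overline{C}_3$ via $y^2=\alpha x/\rho$ and the half-range quadrature of \cite{steen1969gaussian}. The only cosmetic difference is that the paper names the rule used for $\overline{C}_3$ the ``modified Gauss--Chebyshev'' quadrature (since the integration is over $[0,\infty)$ rather than $(-\infty,\infty)$), whereas you call it Gauss--Hermite; the underlying formula \eqref{eq:gauss_chebyshev} is the same.
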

\begin{proof}
See Appendix \ref{Proof_Capacity}.
\end{proof}
\section{Numerical Results \label{sec:results}}
\begin{figure}[htbp] 
\centering
\subfloat[$P_{out}$ versus different propagation distance]{\includegraphics[width=0.9\columnwidth]{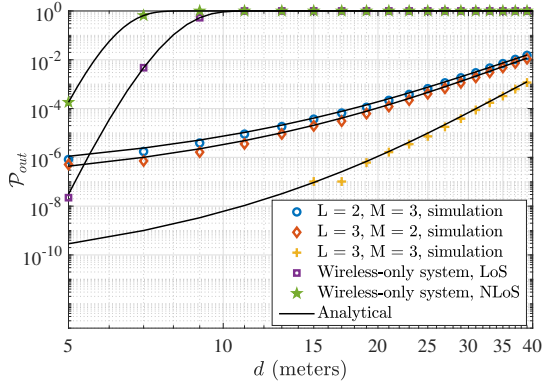}} \\
\subfloat[Average BER versus $L$ and $\rho$]{\includegraphics[width=0.9\columnwidth]{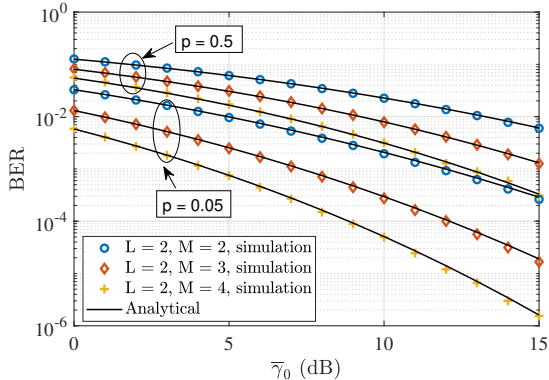}} \\
\subfloat[Channel capacity versus $\bar{\gamma}_0$]{\includegraphics[width=0.9\columnwidth]{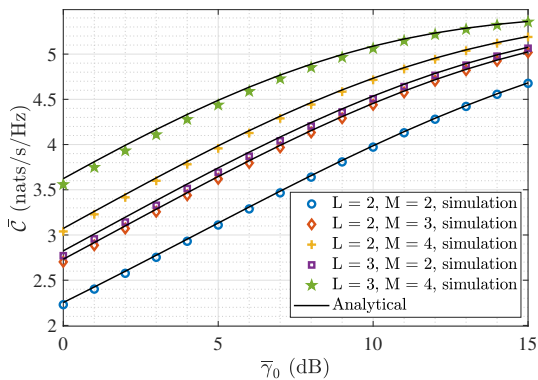}}
\vspace{0.1cm}
\caption{Performance metrics investigations, including outage probability, BER, and channel capacity, of the proposed cascaded multiwire-PLC/MIMO-RF system setup.}
\label{fig}
\vspace{0.2cm}
\end{figure}

%

The numerical results of investigated performance metrics are presented in this section. Unless stated otherwise, the following parameter values are used: $\sigma = 12$ dB, $\mu = 0$ dB, $p = 0.05$ for the PLC part and $R_{c} = 0.85$, $N_{r} = 3$, $N_{d} = 2$, $m = 3$, $\Omega = 1$, $\rho = 20$ dB, $c = 1$ for the RF part.

In Fig. \ref{fig}. (a), the outage probability of the proposed system compared with wireless-only system when the Tx and Rx have line-of-sight (LoS) and NLoS conditions is plotted. We consider an indoor scenario where the path loss exponents are 2.8 and 3.2 for LoS and NLoS scenarios, respectively \cite{ai2015radio}; and the attenuation parameters for PLC channel are $\alpha_{1} = 0.00933$, $\alpha_{2} = 0.0051$, $k = 0.7$ \cite{ai2016capacity}. For simplicity, we assume that for the cascaded setup, the distance of wireless transmission is fixed at 2 meters with varying PLC transmission distance. It is assumed that the PLC operates at 20 MHz \cite{ai2016capacity}. Also, the powers are equally allocated to PLC and wireless parts of the cascaded system with transmit SNR being 25 dB for both links. The optimal power allocation is not presented due to page limit and is subject of future research.

Apparently, one can observe that our analytical results are validated by the Monte-Carlo simulation results, and also from Fig. \ref{fig}. (a) that the cascaded system outperforms the wireless-only system in terms of coverage at the same rate (or in terms of rate at the same coverage), especially in NLoS conditions\footnote{It is worth mentioning that at the time of writing this manuscript, one of the largest telecom equipment vendors in the world launched their latest WiFi solution Q2 that was advertised as the "World's 1st Hybrid Home PLC/Wi-Fi System". The cascaded combination of PLC and WiFi to provide fast connection in large areas (up to 360 $\textrm{m}^{2}$) is one of Q2's selling points.}.

Fig. \ref{fig}. (b) shows the BER performance of BPSK against $\bar{\gamma}_0$ for varying configurations of the investigated setup. It follows from the results that the BER performance improves significantly even as a small number of more subchannels and/or relaying nodes are installed. This performance enhancement of the proposed setup is also reaffirmed in terms of system capacity as illustrated in Fig. \ref{fig}. (c). 

Similar with Figs. \ref{fig}. (a) and (b), the channel capacity of our proposed system setup is presented in Fig. \ref{fig}. (c) with consideration of the impacts of the number of relays $M$ and the number of branches $L$ at the receiver. Obviously, the analytical and simulation results are also found to be consistent in the previous two figures. In addition, the performance gain obtained from the increase of $L$ is fairly larger than simply increasing $M$, which is due to the larger $L$ means better quality of received SNR at relaying nodes.

\section{Conclusion \label{sec:conclusion}}
In this letter, we investigated a multi-wire/MIMO-RF cascaded system under the opportunist DF relaying scheme. Important performance metrics such as outage probability, BER, and capacity were analytically derived in their closed forms, which were also verified using Monte-Carlo simulation results. One can obtain that our proposed system setup is advantageous when enhancing the coverage extension and date rate, especially when Tx and Rx have a NLoS connection.

\appendices
\section{Proof for (\ref{BER})} \label{Proof_BER}
Utilizing \cite[Eq.~(3.381.3)]{jeffrey2007table} for $P_{e1}$ leads to $P_{e1} = \frac{ \Gamma(p) }{ q^{p} }$.

To obtain the solution of $P_{e2}$, we first represent the exponential and Gamma functions terms through their Meijer $G$-functions and then utilize the relation \cite[Eq.~(23)]{adamchik1990algorithm}
\begin{align}
P_{e2} = & \int_{ 0 }^{ \infty } \!   \frac{ x^{ p-1 }    }{ \Gamma( \mathcal{A} ) }  \MeijerG*{1}{0}{0}{1}{-}{ 0 }{ q x }  \MeijerG*{2}{0}{1}{2}{1}{ 0, \mathcal{A} }{ \frac{ \alpha x }{ \rho } } dx \nonumber \\
= & \frac{ 1 }{ q^{p}  \Gamma( \mathcal{A} ) }   \MeijerG*{2}{1}{2}{2}{ 1-p, 1 }{ 0, \mathcal{A} }{ \frac{ \alpha  }{ q  \rho } }  .
\label{eq:avg_ber2}
\end{align}

To solve $P_{e3}$, making the interchange of RVs $qx \! = \! y^{2}$ to yield: 
\begin{align}
P_{e3} \!  =& \! \int_{ 0 }^{ \infty } \exp\left( - y^{2} \right) \! \frac{2y^{2p-1}}{\Gamma( \mathcal{A} ) q^p}     \Gamma\!\left( \mathcal{A}, \frac{ \alpha y^{2} }{ q \rho } \right) \nonumber \\
& \times \left[ \Phi\!\left( a_{0} \! - \! a_{1} \!  \left(\! \frac{ y^{2} }{\overline{  q \gamma }_{0}} \! \right)^{ -\frac{a_{2}}{\lambda} }  \right)  \right]^{M}  dy.
\end{align}
Unfortunately, there exists no closed-form solution to this integral. The integral can be efficiently computed using the modified Gauss-Chebyshev method \cite{steen1969gaussian}, i.e.,
\begin{align}
\int_{0}^{\infty} \exp(-y^{2})  f(y) dy = \sum_{\iota=1}^{L} w_{\iota}  f(y_{\iota}).
\label{eq:gauss_chebyshev}
\end{align}
Hence, $P_{e3}$ can be efficiently evaluated by
\begin{align}
P_{e3} \!\! = \!\!  \sum_{\iota=1}^{L} \frac{ 2 w_{\iota} y_{\iota}^{2p-1} }{ q^{p}\Gamma( \mathcal{A} ) }  \Gamma\left( \!  \mathcal{A}, \!   \frac{ \alpha y_{\iota}^{2} }{ q \rho }  \! \right ) \!  \left[ \! \Phi\!\left( a_{0} \! - \! a_{1} \!  \left( \frac{ y_{\iota}^{2} }{q \overline{ \gamma }_{0}}  \right)^{ -\frac{a_{2}}{\lambda} }  \right)  \!  \right]^{ \! M }.
\label{eq:avg_ber3}
\end{align}
Finally, substituting the expressions of $P_{e1}$, $P_{e2}$, and $P_{e3}$ into (\ref{eq:avg_ber1b}), and making some algebraic manipulations, the proof for (\ref{BER}) is obtained. 
\section{Proof for (\ref{Capacity})}  \label{Proof_Capacity}
Substituting the relevant functions into their Meijer $G$-function counterparts and further capitalizing \cite[Eq.~(23)]{adamchik1990algorithm}, $\overline{C}_{1}$ can be solved as
\begin{align}
\overline{C}_{1} \! = & \! \! \int_{0}^{\infty}  \!\!\! \frac{ \alpha^{\mathcal{A} }  x^{ \mathcal{A} - 1 } }{ \Gamma( \mathcal{A} )  \rho^{ \mathcal{A} } }  \MeijerG*{1}{2}{2}{2}{1,1}{1, 0 }{\! x }  \MeijerG*{1}{0}{0}{1}{-}{ 0 }{\!\! \frac{ \alpha x }{ \rho } } \,dx  \nonumber  \\
= & \frac{ \alpha^{\mathcal{A} }  }{ \Gamma( \mathcal{A} )  \rho^{ \mathcal{A} } }   \MeijerG*{3}{1}{2}{3}{ - \mathcal{A}, 1 - \mathcal{A} }{ 0, -\mathcal{A}, -\mathcal{A} }{\!\! \frac{ \alpha  }{ \rho } } .
\label{eq:avg_cap2}
\end{align}

In the case of $M>1$, we apply the transformation $\sqrt{2}z = a_0 - a_1\bigl(\frac{x}{\bar{\gamma}_0}\bigr)^{-\frac{a_2}{\lambda}}$ and then use the similar approach outlined in \cite[Eqs.~(43-45)]{mathur2018performance}, to obtain the following expression of $\overline{C}_{2}$
\begin{align}
\overline{C}_{2}& =  \int_{0}^{\infty} \ln( 1 + x )  \overline{F}_{\gamma_{D}}(x) \, d F_{\gamma_{R}}(x)  \nonumber  \\
&=  \int_{0}^{\infty} \ln( 1 \! + \! x ) \cdot \frac{ \Gamma\!\bigl( \mathcal{A}, \frac{ \alpha x }{ \rho } \bigr) }{ \Gamma( \mathcal{A} ) } \, d \left[ \! \Phi\!\Bigl( a_{0} \! - \! a_{1}\bigl( \frac{x}{\overline{\gamma}_{0}} \bigr)^{ -\frac{a_{2}}{\lambda} }  \Bigr)  \! \right]^{M} \nonumber  \\
&=   \int_{-\infty}^\infty \exp(-z^2) \cdot g(z)dz \nonumber \\
&\mathop \approx ^{(a)}  \sum \limits_{i =1}^\tau \varpi_{i} \cdot g(z_{i}) ,
\end{align}
where
\begin{equation}
\begin{split}
g(z) = & \frac{M [\Phi(\sqrt{2}z)]^{M-1}}{\sqrt{\pi}\Gamma( \mathcal{A})}  \ln \biggl[ 1 + \bar{\gamma}_0 \Bigl( \frac{a_0 - \sqrt{2}z}{a_1}\Bigr)^{-\frac{\lambda}{a_2}} \biggr] \\
& \times \Gamma\biggl(\mathcal{A}, \bar{\gamma}_0 \Bigl( \frac{a_0 - \sqrt{2}z}{a_1}\Bigr)^{-\frac{\lambda}{a_2}}  \biggr),
\end{split}
\end{equation}
and step $(a)$ is further developed using the modified Gauss-Chebyshev method \cite{steen1969gaussian}. 

Applying the modified Gauss-Chebyshev method on $\overline{C}_{3}$ \cite{steen1969gaussian}, one can similarly evaluate $\overline{C}_{3}$ with the following numerical expression
\begin{align}
\overline{C}_{3}& =   \frac{ 2 \alpha^{ \mathcal{A} - 1 } }{ \Gamma( \mathcal{A} ) \rho^{ m N_{R} N_{D} - 1 } }   \sum_{\iota=1}^{L} w_{\iota}  y_{\iota} \Bigl( \frac{ \rho y_{\iota}^{2} }{ \alpha } \Bigr)^{  \mathcal{A} - 1 } \nonumber \\ & \times \ln\Bigl( 1 + \frac{ \rho y_{\iota}^{2} }{ \alpha } \Bigr)   \left[ \Phi\left( a_{0} - a_{1} \left( \frac{ \rho y_{\iota}^{2} }{ \alpha \overline{\gamma}_{0} } \right)^{ - \frac{a_{2}}{\lambda} } \right) \right]^M .
\label{eq:gauss_chebyshev2}
\end{align}

Finally, substituting the expressions of $\overline{C}_{1}$, $\overline{C}_{2}$, and $\overline{C}_{3}$ into (\ref{eq:avg_cap1}), the proof for the average capacity of the investigated system is accomplished.

\balance
\bibliographystyle{IEEEtran}
\bibliography{cascaded_plc_wireless_system_ref1}

\end{document}